\newtheorem{theorem}{Theorem}
\newtheorem{definition}[theorem]{Definition}
\newtheorem{lemma}[theorem]{Lemma}
\newcommand{\lra}{\longrightarrow}
\newcommand{\nn}{\ensuremath{\mathbb{N}}}
\newcommand{\rr}{\ensuremath{\mathbb{R}}}
\newcommand{\procs}{\ensuremath{S}} 
\newcommand{\stra}{\ensuremath{\sigma}} 
\newcommand{\strb}{\ensuremath{\tau}} 
\newcommand{\opt}[1]{#1^\sharp} 
\newcommand{\straopt}{\ensuremath{\opt{\stra}}} 
\newcommand{\strbopt}{\ensuremath{\opt{\strb}}} 
\newcommand{\abs}[1]{\ensuremath{|#1|}}    
\newcommand{\norm}[1]{\ensuremath{||#1||}}    
\newcommand{\matrices}{\ensuremath{\RR^{\states\times\states}}}
\newcommand{\vectors}{\ensuremath{\RR^\states}} 
\newcommand{\m}{\ensuremath{\mathrm{Min}}}
\newcommand{\M}{\ensuremath{\mathrm{Max}}}
\newcommand{\states}{\ensuremath{\mathbf{S}}}         
\newcommand{\statesa}{\ensuremath{\states_{\M}}}         
\newcommand{\statesb}{\ensuremath{\states_{\m}}}         
\newcommand{\arn}{\ensuremath{\mathcal{A}}} 
\newcommand{\actions}{\ensuremath{\mathbf{A}}}
\newcommand{\tpseul}{\ensuremath{\delta}}
\newcommand{\tp}[2]{\delta(#1,#2)}
\newcommand{\tpstar}[2]{\delta^\star(#1,#2)}
\newcommand{\fplay}{p}
\newcommand{\ouv}[1]{\mathcal{O}(#1)}
\newcommand{\distrib}[1]{\mathcal{D}\left(#1\right)}
\newcommand{\prob}{\ensuremath{\mathbb{P}_s^{\sigma,\tau}}}
\newcommand{\expec}[4]{\ensuremath{\mathbb{E}_{#1}^{#2,#3}\left [ {#4} \right ]}}
\newcommand{\rew}{\ensuremath{r}}
\newcommand{\utlt}{\ensuremath{u}}  
\newcommand{\putlt}[1]{\ensuremath{\utlt_{#1}}} 
\DeclareMathOperator{\val}{val}
\newcommand{\ew}{\ensuremath{\mathbf{1}}} 
\newcommand{\poids}{\ensuremath{w}} 
\newcommand{\prtmap}{\ensuremath{\pi}} 
\newcommand{\priority}{\ensuremath{\pi}}
\newcommand{\RR}{\ensuremath{\mathbb{R}}}
\newcommand{\NN}{\ensuremath{\mathbb{N}}}
\newcommand{\borel}{\mathcal{B}}
\title{Blackwell-Optimal Strategies in Priority Mean-Payoff Games}
\author{Hugo Gimbert
\institute{
LaBRI, CNRS, Bordeaux, France}
\email{hugo.gimbert@labri.fr}
 \and 
Wies{\l}aw Zielonka
\institute{LIAFA, Universit\'e Paris 7 Denis Diderot, Paris, France}
\email{wieslaw.zielonka@liafa.jussieu.fr}
}
\begin{document}
\maketitle

\begin{abstract}
We examine perfect information stochastic mean-payoff games -- a class
of games containing as special sub-classes
the usual mean-payoff games and  parity games.
We show that deterministic memoryless strategies that are 
optimal for discounted games
with state-dependent discount factors close to $1$ are
optimal for priority mean-payoff games establishing a strong link
between these two classes.
\end{abstract}

\section{Introduction}

One of the recurring themes in the  theory of stochastic games is
the interplay between discounted games and mean-payoff games.
This culminates in the seminal paper of Mertens and Neyman
\cite{mertens:neyman:81} showing that mean-payoff games have a value
and this value is the limit of the values of discounted games when the
discount factor tends to $1$. 
Note however that 
  optimal strategies in both games are very different.
As shown by Shapley~\cite{Shap:53}
discounted stochastic games admit memoryless optimal strategies.
On the other
hand mean-payoff games do not have optimal strategies, they have 
only $\varepsilon$-optimal strategies and to play optimally 
players need an unbounded memory.

The connections between discounted and mean-payoff games
become much tighter when we consider perfect information stochastic
games (games where players play in turns).
As discovered by Blackwell~\cite{blackwell:62},
if the discount factor is close to $1$ then
optimal memoryless deterministic 
strategies in discounted games 
are also optimal for
mean-payoff games (but not the other way round).  
Thus both games are related not only by their values 
but also through their optimal strategies.
Blackwell's result extends easily to two-player perfect information
stochastic games.

What happens if instead of mean-payoff games 
we consider parity games -- a class of 
games more directly
relevant to computer science~\cite{graedel:thomas:wilke:2002}?
In particular,
are parity games related to discounted games?

It is well known that deterministic mean-payoff games and parity games
are  related, see \cite{vorobyov:2004}.
The first insight that there is some link between parity games and
discounted games is due to de Alfaro
at al.~\cite{discf:2003}.
It turns out that parity games are related to multi-discounted games
with
multiple
discount factors that depend on the state.
This should be compared with discounted games with a unique, state
independent,  discount factor 
which are used in the study 
of mean-payoff games.

Like in the classical theory of stochastic games, we examine
what happens when the discount factors tend to $1$, the idea is that
in the limit we want to obtain  parity games.
Note that if we have several state dependent
discount factors $\lambda_1,\ldots \lambda_k$ then there are two
possibilities to approach $1$:
\begin{itemize}
\item
we can study the iterated limit  
$\lim_{\lambda_1\to 1}\ldots \lim_{\lambda_k\to 1}$ when discount
factors tend to $1$ one after another (i.e. first we go to $1$ with the
discount factor $\lambda_k$ associated with some group of states,
when the limit is reached then we go to $1$ with   the next discount
factor $\lambda_{k-1}$ etc.,
\item
another possibility it to examine a simultaneous limit when all factors
go to $1$ at the same time but with different rates, this will be made
precise in Section~\ref{sec:blackwell}.
\end{itemize}
The first approach is easier to handle than the second but it leads to weaker
results, 
in particular we lose the links
between optimal strategies in discounted games and optimal strategies
in parity games.

We began our examinations of
relations between discounted and parity games
in
\cite{gimbert:zielonka:2006,gimbert:zielonka:2007c}
where we limited ourselves to deterministic games.
Already this preliminary work revealed that
the natural framework for such a study  goes far beyond parity games.
In fact parity games are related to a very particular restricted class
of discounted games and when we examine all multi-discounted games then 
at the limit we obtain a new natural class of   
games --- priority mean-payoff games.
This new class contains the usual mean-payoff games and parity
games as special subclasses.

The next natural  
step is to  try to extends the results that hold for deterministic
games to perfect information stochastic games.
In two papers \cite{gimbert:zielonka:2007a,gimbert:zielonka:2007b} we
obtained some partial results in this direction.
In \cite{gimbert:zielonka:2007a}
we considered a class of games that contains parity
games but  does not contain mean-payoff games. 
We showed that such games can be seen
as an iterated limit of discounted games --- a limit in a very strong
sense, 
not only the value of the discounted games converges to the value of
the parity game but also optimal strategies in one class are inherited 
by the class of games obtained in the limit.
But these results are not satisfactory for two reasons, the class of games 
for which we were able to carry our study is too restrictive.
This class 
involves some technical restrictions on discounted games, 
which are natural for parity games, but 
not so natural for discounted games. 
The second problem comes from the fact that
\cite{gimbert:zielonka:2007a} uses
the iterated limit of discount factors and not the more interesting
simultaneous limit.


In the second paper \cite{gimbert:zielonka:2007b} we considered
priority mean-payoff games in full generality, with no artificial
restrictions,  and we examined directly the limit
with the discount factors tending to $1$ with different rates rather
than the iterated limit. However \cite{gimbert:zielonka:2007b}
deals  only with one-player games and it examines only games values,
the paper does not provide any
relation between optimal strategies in multi-discounted games and
optimal strategies in the priority mean-payoff games in the limit.

In the present paper we remove all restrictions imposed in 
\cite{gimbert:zielonka:2007a,gimbert:zielonka:2007b}. We consider
the full class perfect information stochastic priority mean-payoff games and we
show that such games are a limit of discounted games with discount
factors tending to $1$ with the rates depending on the priority.
Not only at the limit the value of the discounted game equals to the
value of the priority mean-payoff game but also optimal 
deterministic memoryless strategies in discounted games turn out to be
optimal in the the corresponding priority mean-payoff game.

The interest in such a result is threefold. 

First we think that 
establishing a very strong link between two apparently different
classes of games has its own intrinsic interest.  

Discounted
games were thoroughly studied in the past and  our result shows that
algorithms for such games can, in principle, be used to solve parity
games (admittedly all depends on how much the discount factor should be
close to $1$ in order that two types of games become close enough,
and this remains open). 

Another point concerns the stability of
solutions (optimal strategies and games values) under small
perturbations.
When we examine stochastic games then the natural question is where
the transition probabilities come from? If they come from an
observation then the values of transition probabilities are not
exact. On the other hand algorithms for stochastic games use only
rational transition probabilities thus even if we know the exact
probabilities we replace them by close rational values. What 
is the impact of such approximations on solutions, are optimal
strategies stable under small perturbations?
Usually we tacitly assume that this is the case but it would be better
to be sure. Since Blackwell-optimal strategies studied in
Section~\ref{sec:blackwell} are
stable under small perturbations of discount factors (because they do
not depend on the discount factor) this adds some credibility to the
claim that
Blackwell optimal strategies are stable for parity games.

And the last point. Blackwell invented Blackwell optimality because he
was not satisfied with the notion of optimal strategies for
mean-payoff Markov decision processes. 
However the same can be said about parity games, we defer  examples
to the final section.
 
The paper is organized as follows.
In Section~\ref{sec:introgames} we introduce stochastic games in
general, we define the notions of value and optimal strategies.
Section~\ref{sec:discounted} we examine discounted games.
The main result in this section shows that if discount factors
 are close to
$1$ then optimal strategies stabilize (Blackwell optimality).
In Section~\ref{sec:priority} we introduce the class of priority
mean-payoff games --- this is the principal class of games examined in 
this paper. Parity games and mean-payoff games are just very special
subclasses of this class.
In Section~\ref{sec:fromblackwell} we prove the main result of the
paper stating that deterministic memoryless strategies optimal for
discounted games for discount factors sufficiently close to $1$ 
 are optimal in derived priority mean-payoff games.


\section{Stochastic Games with Perfect Information}
\label{sec:introgames}
\paragraph{Notation.} 
In this paper $\NN$ stands for the set of positive integers,
$\NN_0=\NN\cup\{0\}$, and $\RR_+$ is the set of positive real numbers.

For each finite set $X$, $\distrib{X}$ is the set of probability
distributions over $X$, i.e. it is the set of mappings $p : X\to
[0,1]$ such that $\sum_{x\in X}p(x)=1$. The support of
$p\in\distrib{X}$ is the set $\{x\in X : p(x)>0\}$.
\subsection{Games and Arenas}
Two players $\M$ and $\m$ are playing an infinite game on an arena.
An arena  is a tuple 
\[
\arn=(\states,\statesa,\statesb,\actions,(\actions(s))_{s\in\states},\tpseul),
\]
where a finite set of states $\states$ is partitioned in two sets,
the set
$\statesa$ of states 
controlled by player $\M$ and
the set $\statesb$ of states controlled by player $\m$.
For each state $s\in \states$ there is a non-empty finite set $\actions(s)$ of
actions available in $s$, $\actions=\bigcup_{s\in\states} \actions(s)$.
Players \M\ and \m\ play on \arn\ an infinite game.
If at stage $i\in\NN_0$ the game is in a state $s_i\in\states$
then the player controlling  $s$ chooses an action from  
$\actions(s)$ and a new state $s_{i+1}$ is chosen
with probability 
specified by the transition mapping $\tpseul$.
Transition mapping \tpseul\ maps each pair $(s,a)$, where
$s\in\states$
and $a\in\actions(s)$, to an element of
$\distrib{\states}$. Intuitively, if  in a
 state $s$ and  an action $a$ is executed then 
$\tp{s}{a}(t)$ gives the probability
that at the next stage the game  is in  state $t$.
To simplify the notation we shall 
write $\tp{s,a}{t}$  rather than $\tp{s}{a}(t)$.

Throughout the paper we assume that all arenas are finite, i.e. 
the sets of states and actions are finite.

An arena is said to be a \emph{one-player arena} controlled by player $\M$ 
if, for every state $s$ controlled by $\m$, 
the set  $\actions(s)$ is a singleton (in particular if all states are
controlled by \M\ then \arn\ is a one-player arena controlled by \M).
One-player arenas controlled by player $\m$ are defined similarly.

A finite (resp. infinite) \emph{play} in the arena $\arn$ is a non-empty 
finite (resp. infinite) sequence of states and actions
in $(\states\actions)^*\states$ (resp. in $(\states\actions)^\omega$).
In the sequel ``play'' without any attribute will be used as a synonym of
``infinite play''.

\subsection{Payoffs}

After an infinite play player \M\ receives a payoff from 
player \m. The objectives of the players are opposite,
the goal of \M\ is to maximize the payoff while player \m\ wants to
minimize the payoff.

The payoff can be computed in various ways. For example in a mean-payoff
game each state is labeled with a real number called the reward
and after an infinite play the payoff of player \M\ is the limit of mean
values of the 
sequence of rewards. 
In a parity game,
each state is labeled with an integer called a priority and player
\M\ receives payoff 
$0$ or $1$ depending on the parity of the highest priority seen
infinitely often. 
In both examples, the way the payoffs are computed is independent
from the transitions rules of the game (the arena), it depends
uniquely on the play.

%


Thus formally a payoff function is a mapping
\[
\utlt : (\states\actions)^\omega \to \RR
\]
from infinite plays to real numbers.

A game is a couple $\Gamma=(\arn,\utlt)$ made of an arena and a payoff function.
Usually we consider not a particular game but rather a class of
games. In this case arenas are endowed with some additional structure, usually
some labeling of states or actions (for example
rewards as in mean-payoff games or priorities as in parity games) and 
this labeling is used to define the payoff for games in the given class.

\subsection{Strategies}

Playing  a game the players use strategies.
A \emph{strategy} for player $\M$ is a mapping 
$\stra:(\states\actions)^*\statesa\to \distrib{\actions}$
such that for every finite play $p=s_0a_0s_1a_1\ldots s_n$ with $s_n\in\statesa$,
the support of $\stra(p)$ is a subset of the actions available in $s_n$,
i.e. for all $a \in\actions$, 
if $\stra(p)(a)>0$ then $a\in\actions(s_n)$.

Strategies for player $\m$ are defined similarly and denoted $\strb$.

Certain types of strategies are of particular interest.
A strategy is \emph{deterministic}
if it chooses actions in a deterministic way,
and it is \emph{memoryless} if
it does not have any memory,
i.e.  choices depend only on the current state of the game, and not on
the past history. Formally:

\begin{definition}\label{defi:strats}
A strategy $\stra$ of player $i\in\{\m,\M\}$ is said to be:
\begin{itemize}
\item[\textbullet] \emph{deterministic} 
if, $\forall \fplay\in(\states\actions)^*\states_i$, 
if $\stra(\fplay)(a) > 0$ then $\stra(\fplay)(a)=1$,
\item[\textbullet] \emph{memoryless} 
if, $\forall t\in\states_i$ and $\fplay\in(\states\actions)^*$, 
$\stra(\fplay t)=\stra(t)$.
\end{itemize}
\end{definition}



For any finite play $\fplay\in(\states\actions)^*\states$ and
an action $a\in\actions$ we define
the cones $\ouv{\fplay}$ and $\ouv{\fplay a}$ 
as the sets consisting  
of all infinite plays with prefix \fplay\ and $\fplay a$ respectively. 

In the sequel we assume that 
the set of infinite plays $(\states\actions)^\omega$ 
is equipped with the $\sigma$-field $\borel((\states\actions)^\omega)$
generated by the collection of all cones $\ouv{\fplay}$ and
$\ouv{\fplay a}$.
Elements of this $\sigma$-field are called  \emph{events}.
Moreover, when there is no risk of confusion,
the events $\ouv{\fplay}$
and $\ouv{\fplay a}$ will be denoted simply $\fplay$ and $\fplay a$.

Suppose that players \M\ and \m\ 
are playing accordingly to strategies $\stra$ and $\strb$.
Then after a finite play $s_0a_1\ldots s_n$ the probability of
choosing an actions $a_{n+1}$ is either 
$\stra(s_0a_1\ldots s_n)(a_{n+1})$
or 
$\strb(s_0a_1\ldots s_n)(a_{n+1})$ depending on whether $s_n$ belongs
to $\statesa$ or to $\statesb$.
Fixing the initial state $s\in\states$ these probabilities and
the transition probability \tpseul\ yield
the following probabilities
\begin{equation}
\label{eq:decal0}
\prob( s_0) =\begin{cases}
             1 & \text{if $s_0=s$} \\
             0 & \text{if $s_0\neq s$}
\end{cases}
\end{equation}
is the probability of the cone $\ouv{s_0}$,
\begin{equation}
\label{eq:decal1}
\prob( s_0a_1\ldots s_na_{n+1} \mid s_0a_1\ldots s_n) =
  \begin{cases}
     \stra(s_0a_1\ldots s_n )(a_{n+1}) & \text{if $s_n\in \statesa$} \\
     \strb(s_0a_1\ldots s_n )(a_{n+1}) & \text{if $s_n\in \statesb$} 
  \end{cases}
\end{equation}
is the conditional probability of $\ouv{s_0a_1\ldots s_n a_{n+1}}$
given $\ouv{s_0a_1\ldots s_n}$ and
\begin{equation}
\label{eq:decal2}
\prob( s_0a_1\ldots s_n a_{n+1}s_{n+1} \mid s_0a_1\ldots s_n a_{n+1})
=\tp{s_n,a_{n+1}}{s_{n+1}}
\end{equation}
is the conditional probability of the cone 
$\ouv{s_0a_1\ldots s_n a_{n+1}s_{n+1}}$
given the cone $\ouv{s_0a_1\ldots s_n a_{n+1}}$.

Ionescu Tulcea's theorem~\cite{shiryayev}
implies that
there exists a unique probability measure $\prob$ on the measurable space
$((\states\actions)^\omega,\borel(\states\actions)^\omega)$
satisfying \eqref{eq:decal0}, \eqref{eq:decal1} and \eqref{eq:decal2}.

\subsection{Optimal strategies}

Let
$\arn=(\states,
\statesa,\statesb,\actions,(\actions(s))_{s\in\states},\tpseul)$
be an arena. In the sequel we assume that all payoff mappings $\utlt :
(\states\actions)^\omega \to \RR$ are bounded and measurable
(for measurability we assume that $(\states\actions)^\omega$ 
is equipped with the $\sigma$-field 
described in the preceding section and \RR\ is equipped with the $\sigma$-field
$\borel(\RR)$ of Borel sets).

Given 
an initial state $s$ and  strategies $\stra$ and $\strb$ of \M\ and \m\
the expected value of the payoff $\utlt$ under $\prob$ is denoted
$\expec{s}{\stra}{\strb}{\utlt}$.

A strategy $\straopt$ for player $\M$ 
is said to be $\emph{optimal}$ in a game $(\arn,\utlt)$
if for every state $s$,
\[
\inf_\strb \expec{s}{\straopt}{\strb}{\utlt} = 
\sup_\stra  \inf_\strb \expec{s}{\stra}{\strb}{\utlt}\enspace.
\]
Dually a strategy $\strbopt$ of player \m\ is \emph{optimal}
if 
$\sup_\stra \expec{s}{\stra}{\strbopt}{\utlt} = 
\inf_\strb \sup_\stra   \expec{s}{\stra}{\strb}{\utlt}$, for each state $s$.


In general, 
\[
\underline{\val}_s(\utlt):=\sup_\stra  \inf_\strb \expec{s}{\stra}{\strb}{\utlt} \leq
\inf_\strb \sup_\stra \expec{s}{\stra}{\strb}{\utlt} :=\overline{\val}_s(\utlt)
\]
but when these two quantities are equal
then 
the state $s$ is said to have the \emph{value} 
$\val_s(\utlt)=\underline{\val}_s(\utlt)=\overline{\val}_s(\utlt)$, denoted also
$\val_s(\utlt,\arn)$ whenever mentioning explicitly the arena  is needed.
Under the hypothesis that $\utlt$ is measurable and bounded,
Martin's theorem~\cite{Martin:98}
guarantees that every state has a value.
Notice however that Martin's theorem does not guarantee the existence 
of optimal strategies.

\section{Discounted Games}
\label{sec:discounted}
Arenas for discounted games
are equipped with two mappings
defined on the set \states\ of states.
The \emph{discount mapping} 
\begin{equation*}
\lambda : \states \lra [0,1)
\end{equation*}
 associates
with each state $s$ a discount factor $\lambda(s)\in[0,1)$ 
and
the \emph{reward mapping} 
\begin{equation}
\rew : \states \lra \rr
\label{eq:rewrd}
\end{equation}
maps each state $s$ to a real valued reward $\rew(s)$.

The payoff 
\[
\putlt{\lambda} : (\states\actions)^\omega \lra \rr
\]
for discounted games is calculated in the following way.
For each play $p=s_0a_0s_1a_1s_2a_2\ldots\in(\states\actions)^\omega$
\begin{align}
\notag \putlt{\lambda}( p ) &= 
(1-\lambda(s_0))\rew(s_0)  +\lambda(s_0)(1-\lambda(s_1))\rew(s_1) + 
\lambda(s_0)\lambda(s_1)(1-\lambda(s_2))\rew(s_2) + \ldots \\
&=\sum_{i=0}^\infty
\lambda(s_0)\ldots\lambda(s_{i-1})(1 - \lambda(s_i))\rew(s_i) \enspace .
\label{eq:defmu}
\end{align}

Usually when discounted games are considered 
it is assumed that there is only one discount
factor, i.e. that there exists $\lambda\in [0,1)$ such that
$\lambda(s) = \lambda$ for all $s\in\states$.
But for us it is essential that the discount factor depends on the state.

Shapley~\cite{Shap:53} proved\footnote{In fact, Shapley
considered a much larger class of stochastic games. For these games he
proved that both players have memoryless optimal strategies. For
perfect information  games   his proof
yields optimal strategies that are also deterministic.}
that
\begin{theorem}[Shapley]
Discounted games 
$(\arn, \putlt{\lambda})$ over finite arenas admit optimal
deterministic memoryless strategies
for both players.
\label{th:shapley}
\end{theorem}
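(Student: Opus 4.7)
The plan is the standard Shapley/Banach fixed point argument, adapted to state-dependent discount factors. Since $\states$ is finite and $\lambda(s) \in [0,1)$ for each $s$, the quantity $\Lambda := \max_{s \in \states} \lambda(s)$ is strictly less than $1$, and this uniform bound will drive everything.

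First I would introduce the Shapley operator $T : \RR^\states \to \RR^\states$ by
$$
T(v)(s) = (1 - \lambda(s))\rew(s) + \lambda(s) \cdot \mathrm{opt}_{a \in \actions(s)} \sum_{t \in \states} \tp{s,a}{t}\, v(t),
$$
where $\mathrm{opt}$ denotes $\max$ when $s \in \statesa$ and $\min$ when $s \in \statesb$. Using the elementary bounds $|\max f - \max g|, |\min f - \min g| \leq \max_x |f(x) - g(x)|$ together with $\lambda(s) \leq \Lambda$, one gets $\|T(v) - T(v')\|_\infty \leq \Lambda \|v - v'\|_\infty$. Hence $T$ is a contraction on $(\RR^\states, \|\cdot\|_\infty)$, and Banach's fixed point theorem yields a unique $v^\star$ with $T(v^\star) = v^\star$.

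Second, because each $\actions(s)$ is finite, the optima defining $T(v^\star)(s)$ are attained by some action. Choosing, at each $s \in \statesa$, an action $\straopt(s)$ that attains the maximum, and at each $s \in \statesb$ an action $\strbopt(s)$ that attains the minimum, defines memoryless deterministic strategies for the two players.

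Third, I would verify that $v^\star$ equals the value and that $\straopt, \strbopt$ are optimal. Fix any strategy $\strb$ of $\m$; unfolding the fixed-point equation one step and using that $\straopt$ attains the max at $\M$-states while any action is at least the min at $\m$-states, an induction on $n$ yields
$$
v^\star(s) \leq \expec{s}{\straopt}{\strb}{\sum_{i=0}^{n-1} \lambda(s_0)\cdots\lambda(s_{i-1})(1-\lambda(s_i))\rew(s_i) + \lambda(s_0)\cdots\lambda(s_{n-1})\, v^\star(s_n)}.
$$
The tail expectation is bounded in absolute value by $\Lambda^n \|v^\star\|_\infty$, which tends to $0$, and the partial sum converges to $\expec{s}{\straopt}{\strb}{\putlt{\lambda}}$ by dominated convergence (rewards are bounded and the discount weights sum to at most $1$). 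Symmetrically, $\expec{s}{\stra}{\strbopt}{\putlt{\lambda}} \leq v^\star(s)$ for every $\stra$; combining the two inequalities shows that $v^\star(s) = \val_s(\putlt{\lambda})$ and that $\straopt, \strbopt$ are optimal.

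The only genuinely technical point, and the main obstacle, is the passage from the one-step Bellman inequality to the infinite-horizon expected payoff: one must carry along the products of state-dependent discount factors $\lambda(s_0)\cdots\lambda(s_{n-1})$ along random trajectories, and the ingredient that makes the tail die is precisely the uniform bound $\Lambda < 1$, which is available only because $\states$ is finite. Everything else is a routine verification.
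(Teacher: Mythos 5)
Your proof is correct, and it is essentially the standard Shapley contraction-mapping argument (adapted to state-dependent discount factors) that the paper invokes by citation to Shapley~\cite{Shap:53} rather than proving itself; the one-step Bellman unfolding plus the uniform bound $\Lambda<1$ on the finite state space is exactly how the deterministic memoryless optimality is obtained in the perfect-information case. Nothing further is needed.
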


\subsection{Interpretations of discounted games}
\label{sec:interpretations}
The rather obscure formula \ref{eq:defmu} can be interpreted in
several ways.
The usual economic interpretation is the following. The reward
$\rew(s)$ represents the payoff that player \M\ receives if the state
$s$ is visited. But a given sum of  money is worth more now than in
the future, visiting $s_i$ at stage $i$ is worth 
$\lambda(s_1)\ldots\lambda(s_{i-1})\rew(s_i)$ rather than $\rew(s_i)$
(visiting $s_i$ is worth $\rew(s_i)$ only the first day).
With this interpretation 
$\sum_{i=0}^\infty \lambda(s_0)\ldots\lambda(s_{i-1})\rew(s_i)$ represents
the accumulated total the payoff that player \M\ receives during an infinite
play. 
However, with this interpretation 
it is difficult to assign a meaning to the factors $(1-\lambda(s_i))$
and such factors are essential when we consider the limit of
$\putlt{\lambda}$ with discount factors tending to $1$.

In his seminal paper \cite{Shap:53} Shapley
gives  another interpretation of 
\eqref{eq:defmu} in terms
\emph{stopping games}.
Suppose that at a stage $i$ 
a state  $s_i$ is visited. 
Then with
probability $1-\lambda(s_i)$ the nature can stop the game.
Since we have assumed that $0\leq \lambda(s)<1$
for all $s \in\states$,
the stopping probabilities are strictly 
positive which implies that the game will   eventually stop with probability $1$
after a finite number of steps.

If the game stops in $s_i$ then player
\M\ receives from player \m\ the payment $\rew(s_i)$
and
this ends the game. Thus here player \M\ receives the payoff only
once, when the game stops and the payoff is determined by the last
 state. 

If the game does not stop in $s_i$
then there is no payment at this stage and the player controlling the
state $s_i$
chooses an action to execute.

Note that 
$\lambda(s_0)\ldots\lambda(s_{i-1})(1-\lambda(s_i))$ gives the probability
that the game has not stopped in any of the states
 $s_0,\ldots,s_{i-1}$ but it does
stop in the state $s_i$. Since this
event   results in the payment $\rew(s_i)$, 
\eqref{eq:defmu} represents in this interpretation the
\emph{payoff  expectation} for an infinite play
$s_0a_0s_1a_1s_2a_2\ldots$ during the stopping game.

Another related interpretation making a direct link between discounted
games and mean-payoff games is the following. We transform the
discounted arena
\arn\  into a new arena $\arn^\star$ 
by attaching to each state $s\in\states$ a new state $s^\star$.
We set $\rew(s^\star)=\rew(s)$, i.e. each new adjoined state has the
same reward as the corresponding original state.

In the new arena $\arn^\star$ we incorporate the discount factors directly into the
transition probabilities.
Recall that, for each state $s\in\states$ of the original arena \arn,
$\tp{s,a}{s'}$ was the probability 
of going to a state $s'$ if an action $a$ is executed in $s$. In the
new arena $\arn^\star$ this probability is set to
$\tpstar{s,a}{s'}=\lambda(s)\tp{s,a}{s'}$.
On the other hand we set also
$\tpstar{s,a}{s^\star} = (1-\lambda(s))$, i.e. in 
$\arn^\star$ with probability $1-\lambda(s)$
the execution of $a$ in $s$ leads to $s^\star$ (note that for fixed
$a$ the probabilities sum  up to $1$).

Each new state $s^\star$ is absorbing,
there is only one  action available in each $s^\star$, we
note it $\star$, and this action leads with probability 
$1$ back to $s^\star$.
This situation is illustrated by the following picture.
\begin{center}
\includegraphics[scale=0.9]{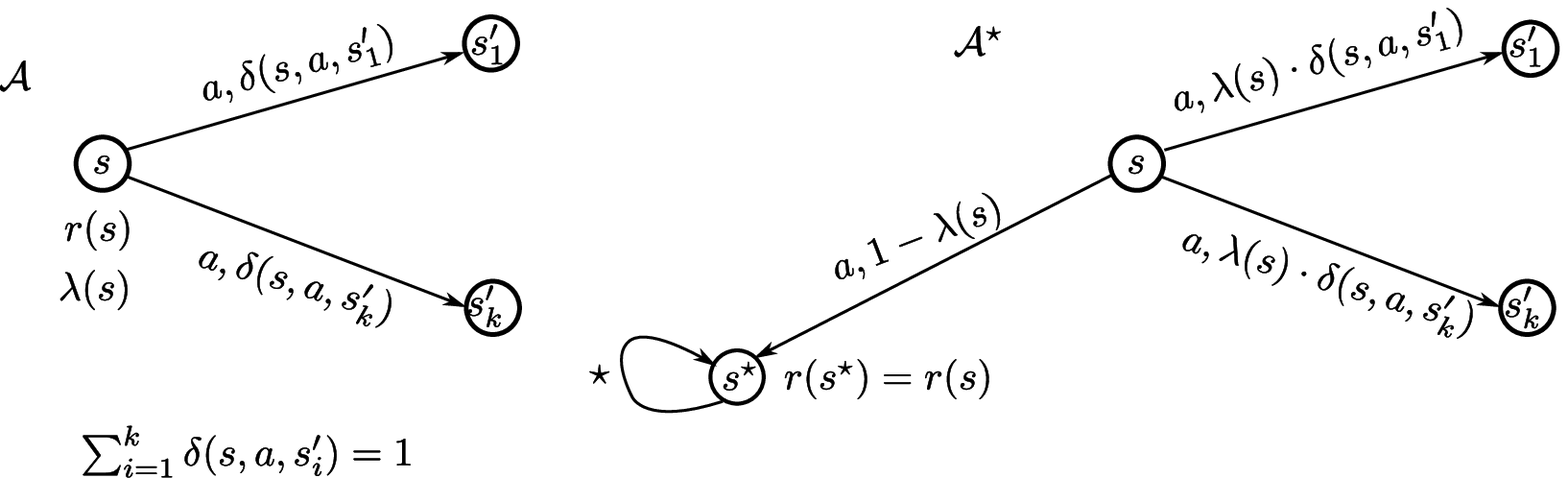}
\end{center}

We consider the mean-payoff game played on $\arn^\star$, i.e. the game
with the payoff $\putlt{\rew}(s_0a_0s_1a_1\ldots)=\limsup_{k}\frac{1}{k+1}
\sum_{i=0}^k \rew(s_i)$. Such a game played on $\arn^\star$ ends with
probability $1$ in one of the starred states $s^\star$ 
and then the mean-payoff
is simply $\rew(s^\star)=\rew(s)$. Intuitively, stopping in $s$ 
with the payoff $\rew(s)$ in the stopping game
 is the same as going to $s^\star$ and
looping there infinitely with the same mean-payoff $\rew(s^\star)$.  
Thus a discounted game can be seen as a  mean-payoff game played on an
arena where with probability $1$ we end in some absorbing state.
If discount factors tend to $1$ then  this means that, intuitively, we
cut off the absorbing starred states of $\arn^\star$.
\section{Blackwell optimality}
\label{sec:blackwell}
We will consider what happens if the discount factors tend to $1$.
The novelty in comparison with the traditional approach is that we
consider 
the situation where discount factors of different states tend to $1$ with
different rates.

A \emph{rational discount parametrization} is a family of mappings 
$\lambda_t=(\lambda_t(s))_{s\in\states}$,
such that for each state $s$,
\begin{itemize}
\item
$t\mapsto \lambda_t(s)$ is a 
rational\footnote{Rational in the sense that $\lambda_t(s)$ 
is a quotient of two polynomials of $t$.}
mapping of $t$,  
\item
there exists $0<\varepsilon<1$ such that $\lambda_t(s)\in[0,1)$ for
all $t\in[1- \varepsilon,1)$ (note that since the set of states is
finite
we can choose the same $\varepsilon$ for all states),
\item
$\lim_{t\uparrow 1}\lambda_t(s) = 1$.
\end{itemize}

A typical example of a rational parametrization is the \emph{canonical
  rational discount parametrization} defined in the following way.
For each state $s$ we fix a natural number $\priority(s)\in\nn$ called the
priority of $s$ and a positive real number $\poids(s)\in(0,\infty)$ called
the weight of $s$. Then the canonical parametrization is defined as
\begin{equation}
 \lambda_t(s)=1-\poids(s)(1-t)^{\priority(s)}, \quad 
\text{for $s\in\states, t\in\RR$.} 
\label{eq:canonical}
\end{equation}

We will consider discounted games where discount factors are given by
a rational discount parametrization.

\begin{theorem}[Blackwell optimality]
\label{th:blackwell}
Let us fix an arena \arn\ of a discounted game and
let $\lambda_t$ be a rational discount parametrization for \arn.
Let $\val_s(\putlt{\lambda_t})$ be the value of a state $s\in\states$ for 
$\lambda_t$ in the game $(\arn,\putlt{\lambda_t})$.

Then  there exists $0<\varepsilon<1$ such that, for each state $s$,
\begin{enumerate}[(1)]
\item
for $t\in(1-\varepsilon,1)$, 
$t \mapsto \val_s(\putlt{\lambda_t})$ is a rational function of $t$ and
\item
if \straopt\ and \strbopt\ are optimal deterministic memoryless
strategies for some 
$t\in(1-\varepsilon,1)$ then \straopt\ and \strbopt\ are optimal for all
$t\in(1-\varepsilon,1)$.
\end{enumerate}
\end{theorem}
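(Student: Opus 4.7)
The plan is to exploit the fact that, by Shapley's theorem (Theorem~\ref{th:shapley}), for each $t$ at which $\lambda_t$ takes values in $[0,1)$, both players admit optimal deterministic memoryless strategies. Since the sets of deterministic memoryless strategies are finite and independent of $t$, the value and optimal strategies are governed entirely by the finite family
\[
f_{\stra,\strb}(s,t) := \expec{s}{\stra}{\strb}{\putlt{\lambda_t}},
\]
with $(\stra, \strb)$ ranging over pairs of deterministic memoryless strategies; Shapley's theorem then yields $\val_s(\putlt{\lambda_t}) = \max_\stra \min_\strb f_{\stra,\strb}(s,t)$ on this finite set.

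First I would show that each $f_{\stra,\strb}(s,t)$ is a rational function of $t$. Once $(\stra,\strb)$ is fixed, the play becomes a time-homogeneous Markov chain whose transition matrix $P_{\stra,\strb}$ is independent of $t$. Writing $v(s) = f_{\stra,\strb}(s,t)$, a one-step unfolding of \eqref{eq:defmu} yields
\[
v(s) = (1-\lambda_t(s))\rew(s) + \lambda_t(s)\sum_{s'\in\states} P_{\stra,\strb}(s,s')\, v(s'),
\]
equivalently $(I - \Lambda_t P_{\stra,\strb})\, v = (I - \Lambda_t)\, \rew$, with $\Lambda_t := \mathrm{diag}(\lambda_t(s))_{s\in\states}$. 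Since each $\lambda_t(s) \in [0,1)$, the matrix $\Lambda_t P_{\stra,\strb}$ has all row-sums strictly less than~$1$, so $I - \Lambda_t P_{\stra,\strb}$ is invertible. Cramer's rule then expresses $v(s)$ as a ratio of two polynomials in the entries of $\Lambda_t$, and because $t \mapsto \lambda_t(s)$ is rational by assumption, $f_{\stra,\strb}(s,t)$ is a rational function of~$t$.

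For the stabilization step, observe that $\{f_{\stra,\strb}(s,\cdot)\}_{\stra,\strb,s}$ is a finite family of rational functions of $t$. Any two distinct such functions either coincide on $[0,1)$ or agree at only finitely many points of it. Let $Z \subset [0,1)$ be the (finite) union of all these finite agreement sets, together with the poles of all the functions in the family. Choose $\varepsilon > 0$ small enough that $(1-\varepsilon,1) \cap Z = \emptyset$ and $\lambda_t(s) \in [0,1)$ for every $s$ and every $t \in (1-\varepsilon,1)$. On this interval, for each state $s$ the pointwise ordering of $\{f_{\stra,\strb}(s,t)\}_{\stra,\strb}$ is independent of~$t$. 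Consequently the argmax-argmin structure of $\max_\stra \min_\strb f_{\stra,\strb}(s,t)$ is the same for every $t \in (1-\varepsilon,1)$: a pair $(\straopt,\strbopt)$ optimal at one $t_0$ in the interval satisfies the same defining inequalities at every $t$ in the interval, proving~(2). Statement~(1) is then immediate, since $\val_s(\putlt{\lambda_t})$ coincides on the interval with the single rational function $f_{\straopt,\strbopt}(s,\cdot)$.

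The main obstacle I anticipate is the clean invocation of Shapley's theorem to justify that the sup-inf over arbitrary strategies coincides with the max-min over deterministic memoryless ones, so that the finite-family reduction is legitimate and the value inherits rationality in~$t$; once this is in place, the remaining analytic content --- that a nonzero rational function on $[0,1)$ has only finitely many zeros, and that a finite family of them therefore admits a common stabilization neighbourhood of~$1$ --- is elementary.
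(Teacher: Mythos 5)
Your proposal is correct and follows essentially the same route as the paper: rationality of $\expec{s}{\stra}{\strb}{\putlt{\lambda_t}}$ for fixed deterministic memoryless strategies via a linear system and Cramer's rule (the paper's Lemma~\ref{lem:rational}, which uses the Neumann series $(I-M)^{-1}=\sum_i M^i$ for the same matrix $\Lambda_t P_{\stra,\strb}$), followed by stabilization near $t=1$ because a finite family of rational functions can change their pairwise ordering only finitely often, and finally the appeal to Shapley's theorem to pass from optimality among deterministic memoryless strategies to optimality among all strategies. The only cosmetic difference is that the paper phrases the stabilization step in terms of sign-constancy of the differences $\Phi_{q,\sigma,\tau,\sigma',\tau'}(t)$ rather than your exclusion of a finite agreement set $Z$, which is the same argument.
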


In the sequel we call strategies \straopt\ and \strbopt\ Blackwell
optimal for a rational discount parametrization
 $\lambda_t$ if  \straopt\ and \strbopt\ are deterministic
memoryless strategies satisfying part (2) of
Theorem~\ref{th:blackwell}.

Let us note that Theorem~\ref{th:blackwell} exhibits a curious
property of discounted games discovered by
Blackwell~\cite{blackwell:62}\footnote{In fact 
Blackwell~\cite{blackwell:62} considered
only one-player games with the same discount factor for all states.}.
By Theorem~\ref{th:shapley} we know that
for each fixed $t$ the discounted game with payoff $\putlt{\lambda_t}$
has optimal memoryless deterministic strategies, but obviously such
strategies depend on $t$. 
Theorem~\ref{th:blackwell} asserts that for $t\in(1-\epsilon,1)$ 
the situation stabilizes and optimal deterministic memoryless
strategies do not depend on $t$.
Since Blackwell optimality is usually proved 
only for Markov decision processes
 with a unique discount factor for all states, see \cite{hor:yush:2002} for
example,
we decided to include the complete proof of Theorem~\ref{th:blackwell}.
Note however that our proof follows closely the one
used for Markov decision processes.

The proof of Theorem~\ref{th:blackwell} is based on the following
lemma that will be useful also in the next section.

\begin{lemma}\label{lem:rational}
Let $t\mapsto \lambda_t$ be a rational discount parametrization and
let $\stra,\strb$ 
be deterministic memoryless strategies.
 Then, for each state $s$, and for $t$ sufficiently close to $1$,
$\expec{s}{\stra}{\strb}{\utlt_{\lambda_t}}$
is a rational function of $t$.
\end{lemma}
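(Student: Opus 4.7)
The plan is to reduce the expectation to the solution of a linear system whose coefficients are rational functions of $t$, and then invoke Cramer's rule. Since $\stra$ and $\strb$ are deterministic and memoryless, the pair $(\stra,\strb)$ induces a Markov chain on $\states$ whose transition matrix $P \in \RR^{\states \times \states}$ is independent of $t$: for $s \in \statesa$, $P(s,\cdot)$ is the distribution $\tp{s}{a}$ where $a$ is the unique action chosen by $\stra$ at $s$, and symmetrically for $s \in \statesb$ using $\strb$.

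Writing $v_t(s) = \expec{s}{\stra}{\strb}{\putlt{\lambda_t}}$, I would first establish the Bellman equation
\[
v_t(s) = (1 - \lambda_t(s))\, \rew(s) + \lambda_t(s) \sum_{s' \in \states} P(s,s')\, v_t(s').
\]
This is obtained by decomposing the defining series of $\putlt{\lambda_t}$ according to the first transition; swapping sum and expectation is justified by Fubini, since the series is dominated termwise in absolute value by $\|\rew\|_\infty \,\Lambda^i$ with $\Lambda = \max_s \lambda_t(s) < 1$ for $t$ close to $1$. In matrix form, with $D_t$ the diagonal matrix of entries $\lambda_t(s)$, this reads $(I - D_t P)\, v_t = (I - D_t)\, \rew$.

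Next I would observe that $D_t P$ has spectral radius strictly less than $1$ (since $\lambda_t(s) \in [0,1)$ and $P$ is row-stochastic), so $I - D_t P$ is invertible, and $v_t = (I - D_t P)^{-1} (I - D_t)\, \rew$. By hypothesis, each $t \mapsto \lambda_t(s)$ is rational in $t$, while the entries of $P$ and $\rew$ are constant, so every entry of $I - D_t P$ and of $(I - D_t)\,\rew$ is a rational function of $t$. By Cramer's rule, the entries of $(I - D_t P)^{-1}$ are cofactors divided by $\det(I - D_t P)$, both polynomial expressions in these entries; hence each $v_t(s)$ is a rational function of $t$.

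The only delicate point is verifying that $\det(I - D_t P)$ is not the zero rational function, which would render the Cramer formula meaningless. The spectral-radius argument above shows this determinant is nonzero throughout the open interval $(1 - \varepsilon, 1)$, so it is not identically zero, and the resulting Cramer expressions define genuine rational functions of $t$ on a neighborhood of $1$.
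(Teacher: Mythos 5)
Your proposal is correct and follows essentially the same route as the paper: both reduce the expectation to the linear system $(I - D_tP)\,v_t = (I-D_t)\rew$ (the paper's matrix $M$ is exactly your $D_tP$), establish invertibility of $I - D_tP$ for $t$ near $1$, and conclude by Cramer's rule that the solution has entries rational in $t$. The only cosmetic difference is that you obtain the linear system via a one-step Bellman decomposition, whereas the paper sums the Neumann series $\sum_i M^i v$ and matches it term by term with the expectation; your explicit check that $\det(I - D_tP)$ is not the zero rational function is a welcome touch the paper leaves implicit.
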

\begin{proof}
The proof is standard but we give it for the sake of completeness.
The set \matrices\  of functions from
$\states\times\states$ into real numbers can be seen as the set of
square real valued matrices with rows and columns indexed by \states.
In particular \matrices\ is a vector space with natural matrix
addition and scalar multiplication.
However, matrix multiplication defines also a product  on \matrices,
for $M,N\in\matrices$, $MN$ is an element $U$ of \matrices\ with entries
$U[s',s'']=\sum_{s\in\states}M[s',s]N[s,s'']$.
We endow \matrices\ with a norm, for $M\in\rr^{\states\times\states}$,
$\norm{M}=\max_{s'\in\states}\sum_{s''\in\states}\abs{M[s',s'']}$.
It can be easily shown that $\norm{MN}\leq\norm{M}\cdot\norm{N}$ for
$M,N\in\matrices$ and
\matrices\ is a complete metric space for
the metric induced by the norm $\norm{\cdot}$, 
see~Section~3.2.1 of \cite{stroock}
 for
a proof.

On the other hand, we consider also the vector space \vectors\ of functions from
\states\ into \rr, they can be seen as column vectors indexed by
states. Of course if $M\in\matrices$ and $v\in\vectors$ then
$Mv\in\vectors$, where $(Mv)[s]=\sum_{s'\in\states}M[s,s']v[s']$ for
$s\in\states$. 

We equip \vectors\ with a norm, for $v\in\vectors$,
$\norm{v}_\infty=\max_{s\in\states}\abs{v[s]}$.
The norms on \matrices\ and \vectors\ are compatible in the sense that
we have
$\norm{Mv}_\infty\leq\norm{M}\cdot\norm{v}_\infty$.

Let $\sigma,\tau$ be  deterministic memoryless strategies for players
\M\ and \m\ and let $\lambda_t$ be a rational discount
parametrization.
We define 
\[
\tp{s'}{s''}= \begin{cases}
               \tp{s',\sigma(s')}{s''} & \text{if $s'\in\statesa$,}\\
               \tp{s',\tau(s')}{s''} & \text{if $s'\in\statesa$,}
\end{cases} \quad \text{for $s',s''\in\states$.}
\]
Thus $\tpseul$ 
defines transition probabilities of the Markov chain obtained when we
fix the strategies $\sigma$ and $\tau$.

In the sequel $M$ will denote the element of \matrices\ defined in the
following way
\begin{equation}
M[s',s'']=\lambda_{t}(s')\tp{s'}{s''},
\quad \text{for $s',s'',\in\states$}
\label{def:matricem}
\end{equation}

Let $I\in\matrices$ be the identity matrix, i.e. $I[s',s'']$ is $1$ if
$s'=s''$ and $0$ otherwise.

We shall show that for $t$ close to $1$ 
the matrix $(I-M)$ is invertible and
\begin{equation}
(I-M)^{-1}
=\sum_{i=0}^\infty M^i .
\label{eq:inverse}
\end{equation}

First we show that the series on the right-hand side of
\eqref{eq:inverse}
converges.

Let $\lambda_{M} = \max_{s\in\states}\lambda_t(s)$. Then for $t$
sufficiently close to $1$ we have
$\norm{M}
\leq \lambda_M < 1$ and,
for $k < l$,
\[ \norm{\sum_{i=k}^l M^i} \leq
\sum_{i=k}^l 
\norm{M}^i \leq
\sum_{i=k}^l (\lambda_M)^i = \frac{\lambda_M^k -
  \lambda_M^{l+1}}{1-\lambda_M} \xrightarrow[ k,l\rightarrow\infty ]{} 0
\] 
since, by the definition of a rational discount parametrization,
 $0\leq \lambda_M < 1$ for $t$ sufficiently close to $1$. 
Thus the series $\sum_{i=0}^\infty M^i$ satisfies the Cauchy
condition and the convergence follows from the completeness of the norm
$\norm{\cdot}$.
Now it suffices to note that 
\[
(I-M)^{-1}\cdot\sum_{i=0}^k
  M^i - I =
  M^{k+1}
\] 
and 
$\norm{M^{k+1}}\leq \norm{M}^{k+1} \leq
\lambda_M^{k+1}\xrightarrow[k\rightarrow\infty]{} 0$ 
which yields \eqref{eq:inverse}.

Let $(\procs_i)_{i=0}^\infty$ be the stochastic process giving the
state at stage $i$.
Then
\begin{multline}
\expec{s}{\sigma}{\tau}{\putlt{\lambda_t}}=
\expec{s}{\sigma}{\tau}{\sum_{i=0}^\infty \lambda_t(\procs_0)\cdots
\lambda_{t}(\procs_{i-1})(1-\lambda_{t}(\procs_i))\rew(\procs_i)}\\
=\lim_{k\rightarrow \infty}
\expec{s}{\sigma}{\tau}{\sum_{i=0}^k \lambda_{t}(\procs_0)\cdots
\lambda_{t}(\procs_{i-1})(1-\lambda_{t}(\procs_i))\rew(\procs_i)}
\label{eq:lebesgue}
\end{multline}
where the second equality follows from the Lebesgue dominated
convergence theorem.

Let $v$ 
be an element of \vectors\ defined as
\[
v[s]=(1-\lambda_t(s))\rew(s), \quad \text{for
  $s\in\states$}.
\]
An elementary induction on $i$ shows that, for $s,s'\in\states$,
\[
\expec{s}{\sigma}{\tau}{ 
\lambda_{t}(\procs_0)\cdots \lambda_{t}(\procs_{i-1}) |
\procs_0=s,\procs_i=s'}
=M^i[s,s'],
\]
i.e. the entry $[s,s']$ of the $i$-th power of
$M$ is the expectation of $\lambda_{t}(\procs_0)\cdots
\lambda_{t}(\procs_{i-1})$
under the condition that $\procs_0=s$ and $\procs_i=s'$.
This yields
\begin{multline}
(M^i v)[s] =
\sum_{s'\in\states}M^i[s,s']\cdot
v[s']=
\sum_{s'\in\states}
\expec{s}{\sigma}{\tau}{ \lambda_{t}(\procs_0)\cdots
\lambda_{t}(\procs_{i-1}) |
\procs_0=s,\procs_i=s'}\cdot (1-\lambda_t(s'))\rew(s') =\\
\expec{s}{\sigma}{\tau}{ \lambda_{t}(\procs_0)\cdots
\lambda_{t}(\procs_{i-1})(1-\lambda_t(\procs_i))\rew(\procs_i) |
\procs_0=s}.
\label{eq:zorro}
\end{multline}
Taking the sum from $i=0$ to $k$ on
both sides of \eqref{eq:zorro} and next the limit with $k$ tending to
infinity,
using \eqref{eq:lebesgue} and
\eqref{eq:inverse},
we obtain
\[
\expec{s}{\sigma}{\tau}{\putlt{\lambda_t}}=
((I-M)^{-1}v)[s] .
\]

But the elements of the matrix $I-M$ are rational functions of $t$,
thus Cramer's rule for matrix inversion show that $(I-M)^{-1}$ has
also rational elements, and since the elements of $v$ are also
rational functions we can see that
$\expec{s}{\sigma}{\tau}{\putlt{\lambda_t}}$ is a rational function of $t$.

\end{proof}

\begin{proof}[Proof of Theorem \ref{th:blackwell}]
According to Lemma~\ref{lem:rational},
and since discounted games admit optimal deterministic memoryless strategies,
(1) is a consequence of (2).

We prove (2)  as follows.

Let $X$ be the set of all tuples $(q,\sigma,\tau,\sigma',\tau')$,
where $q$ is a state, $\sigma, \sigma'$ are deterministic memoryless
strategies for player \M\ and $\tau, \tau'$ are deterministic memoryless
strategies for player \m.  Note that for finite arenas $X$ is finite.
Let $\lambda_t$ be a rational discount parametrization and let
$0<\varepsilon<1$
be such that $\lambda_t(s)\in(0,1)$ for all states $s$ and all
$t\in(1-\varepsilon,1)$. 

For each $(q,\sigma,\tau,\sigma',\tau')\in X$ we consider the function
$\Phi_{q,\sigma,\tau,\sigma',\tau'} : (1-\varepsilon,1)\to\rr$ defined by: 
\[
t \mapsto
\Phi_{q,\sigma,\tau,\sigma',\tau'}(t) =
\expec{q}{\stra}{\strb}{\utlt_{\lambda(t)}}-
\expec{q}{\stra'}{\strb'}{\utlt_{\lambda(t)}}\enspace. 
\]
According to Lemma~\ref{lem:rational},
$\Phi_{q,\sigma,\tau,\sigma',\tau'}(t)$ is a rational function of $t$
for $t$ sufficiently close to $1$. Since a rational function can
change the sign (cross the $x$-axis) only finitely many times
there exists $\varepsilon_1=\varepsilon_1(q,\sigma,\tau,\sigma',\tau')>0$ 
such that the sign of
$\Phi_{q,\sigma,\tau,\sigma',\tau'}(t)$ does not change in the
interval 
$(1-\varepsilon_1,1)$. Let 
$\varepsilon_2=\min\{\varepsilon\}\cup
\{\varepsilon_1(q,\sigma,\tau,\sigma',\tau'):(q,\sigma,\tau,\sigma',\tau')\in
X\}$.

Since $X$ is finite the minimum on the right is taken over a finite
set of positive numbers and
we conclude that $\varepsilon_2>0$

Let us take any $t\in(1-\varepsilon_2,1)$.
Let $\straopt$, $\strbopt$ be optimal deterministic memoryless
strategies in the discounted game $(\arn,\putlt{\lambda_t})$
(Theorem~\ref{th:shapley}). 
Then, in particular, we have
\begin{equation}
\expec{q}{\sigma}{\strbopt}{\putlt{\lambda_t}} \leq
\expec{q}{\straopt}{\strbopt}{\putlt{\lambda_t}} \leq
\expec{q}{\straopt}{\tau}{\putlt{\lambda_t}}  
\label{eq:eez}
\end{equation}
for all deterministic memoryless strategies $\sigma,\tau$.
We can rewrite \eqref{eq:eez} as
$\Phi_{q,\straopt,\strbopt,\sigma,\strbopt}(t)\geq 0$ and\\
$\Phi_{q,\straopt,\tau,\straopt,\strbopt}(t)\geq 0$. 
However if these inequalities hold for some $t\in(1-\varepsilon_2,1)$
then we have seen that they hold for all $t\in(1-\varepsilon_2,1)$.
Therefore \eqref{eq:eez} holds for all $t\in(1-\varepsilon_2,1)$.
Finally Theorem~\ref{th:shapley} implies that
if \eqref{eq:eez} holds for all deterministic memoryless strategies
$\sigma$ and $\tau$ 
(with fixed deterministic memoryless $\straopt$ and $\strbopt$)
then it holds for all strategies $\sigma, \tau$\footnote{In other
  words, for discounted games 
being optimal in the class of memoryless deterministic
  strategies implies being optimal in the class of all strategies.}.
\end{proof}

\section{Priority mean-payoff games}
\label{sec:priority}
In mean-payoff games the players try to optimize (maximize/minimize)
the mean value of the payoff received at each stage.
In such games  the \emph{reward mapping} 
\begin{equation}
\rew : \states \lra \rr
\label{eq:rewardmap}
\end{equation}
gives, for each state $s$,   the payoff received by
player \M\ when $s$ is visited.
The payoff of an infinite
play is defined as the limit of the means  of daily payments:
\begin{equation}
\putlt{\rew}(s_0s_1s_2\ldots) = \limsup_{k} \frac{1}{k+1} \sum_{i=0}^k
  \rew(s_i) \enspace ,
\label{eq:simplemean}
\end{equation}
where we take $\limsup$ rather than the simple limit 
since the latter may not exist.

We slightly generalize  mean-payoff games by 
equipping arenas with a new mapping
\begin{equation}
\poids :   \states \lra \rr_+
\label{eq:poids}
\end{equation}
associating with each state $s$ a \emph{strictly positive} real number $\poids(s)$, the
\emph{weight} of $s$. We can interpret $\poids(s)$ as the amount of
time spent in  state $s$  upon each visit to $s$. In this
setting $\rew(s)$ should be seen as the payoff by a time unit when $s$
is visited, 
thus the weighted mean payoff received by player \M\ is
\begin{equation}
\putlt{\rew,\poids}(s_0s_1s_2\ldots) = \limsup_{k} \frac{\sum_{i=0}^k
  \poids(s_i)\rew(s_i)}{\sum_{i=0}^k \poids(s_i)}\enspace .
\label{eq:notsimplemean}
\end{equation}
Note that in the special case when the weights are all equal to $1$,
the weighted mean value \eqref{eq:notsimplemean} reduces to 
\eqref{eq:simplemean}.

As a final ingredient we add to the arena 
a \emph{priority mapping}
\begin{equation}
\prtmap : \states \lra \NN
\label{eq:prtmap}
\end{equation}
assigning to each state $s$
a positive integer \emph{priority} $\prtmap(s)$.

We define  the \emph{priority} of a play  $p=s_0a_0s_1a_1s_2a_2\ldots$ as the
\emph{smallest} priority appearing infinitely often in the sequence
$\prtmap(s_0)\prtmap(s_1)\prtmap(s_2)\ldots$
of  priorities visited in $p$:
\begin{equation}
 \priority(p) = \liminf_i  \prtmap(s_i)\enspace .
\label{eq:prty}
\end{equation}
For any priority $\alpha$, let
$\ew_\alpha : \states \lra \{0,1\}$ be the indicator function
of the set $\{ s\in\states\mid \prtmap(s) = \alpha \}$, i.e.
\begin{equation}
 \ew_\alpha(s)=\begin{cases}
1 & \text{if $\prtmap(s) = \alpha $} \\
0 & \text{otherwise}.
\end{cases}
\label{def:indic}
\end{equation}

Then the priority mean-payoff of a play $p=s_0a_0s_1a_1s_2a_2\ldots$ is defined as
\begin{equation}
\putlt{\rew,\poids,\prtmap}(p) = \limsup_k 
\frac{\sum_{i=0}^k \ew_{\priority(p)}(s_i) \cdot \poids(s_i) \cdot \rew(s_i)}%
{\sum_{i=0}^k  \ew_{\priority(p)}(s_i)\cdot \poids(s_i)} \enspace .
\label{eq:mpay}
\end{equation}

In other words, to calculate priority mean payoff
$\putlt{\rew,\poids,\priority}(p)$ 
we take weighted mean payoff but with the weights of all
states
having priorities different from $\priority(p)$
shrunk to $0$. (Let us note that
the denominator $\sum_{i=0}^k  \ew_{\priority(p)}(s_i)\cdot
\poids(s_i)$ is different from $0$ for $k$ large enough, in fact it
tends to infinity since $\ew_{\priority(p)}(s_i) = 1$ for infinitely
many $i$. For small $k$ the numerator and the denominator can be equal to
$0$ and then, to avoid all misunderstanding,
it is convenient to assume that the indefinite value
$0/0$ is equal to $-\infty$.)

In the sequel 
the couple $(\poids,\pi)$ consisting of a weight mapping and a
priority mapping will be called  a \emph{weighted priority system}.

Let us note that priority mean-payoff games are a vast generalization
of parity games. In fact parity games correspond to a very particular
case of priority mean-payoff games, we recover the usual  parity games 
when we set for each state $s$,
$\poids(s)=1$ and $\rew(s)=1$ if $\prtmap(s)$ is even and    
$\rew(s)=0$ if $\prtmap(s)$ is odd.

\begin{theorem}\label{theo:optmean}
Priority mean-payoff games over finite arenas admit optimal
deterministic memoryless strategies 
for both players.
\end{theorem}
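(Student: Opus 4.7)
The plan is to obtain this theorem as a direct corollary of Blackwell optimality (Theorem \ref{th:blackwell}) together with the principal result of the paper, to be proved in Section \ref{sec:fromblackwell}, which asserts that Blackwell-optimal strategies for a suitably chosen rational discount parametrization are optimal in the derived priority mean-payoff game. In other words, I do not try to prove this theorem directly but reduce it to the main limit result relating multi-discounted games and priority mean-payoff games.

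More precisely, fix a priority mean-payoff game $(\arn,\putlt{\rew,\poids,\prtmap})$ and associate with its weight and priority mappings the canonical rational discount parametrization
\[
\lambda_t(s)=1-\poids(s)(1-t)^{\prtmap(s)}
\]
introduced in \eqref{eq:canonical}. Shapley's theorem (Theorem \ref{th:shapley}) guarantees that for each fixed $t\in[0,1)$ the discounted game $(\arn,\putlt{\lambda_t})$ admits deterministic memoryless optimal strategies for both players. Blackwell optimality (Theorem \ref{th:blackwell}) then upgrades this to the existence of some $\varepsilon>0$ and a single pair of deterministic memoryless strategies $\straopt,\strbopt$ that are simultaneously optimal in $(\arn,\putlt{\lambda_t})$ for every $t\in(1-\varepsilon,1)$. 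Invoking the main theorem of Section \ref{sec:fromblackwell}, these same Blackwell-optimal strategies are optimal for the priority mean-payoff payoff $\putlt{\rew,\poids,\prtmap}$, which gives exactly Theorem \ref{theo:optmean}.

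The hard part of this argument is not in the present deduction but in the main result of Section \ref{sec:fromblackwell}. What has to be controlled there is the behaviour of $\expec{s}{\straopt}{\strbopt}{\putlt{\lambda_t}}$ as $t\uparrow 1$, and to show that under the canonical parametrization the expected discounted payoff converges to the expected priority mean-payoff. The delicate point is that states of different priorities contribute to the discounted payoff on vastly different time scales, dictated by the exponents $\prtmap(s)$ in $\lambda_t(s)$. Disentangling the contribution of the states whose priority equals the minimum visited infinitely often from the transient contributions of higher-priority states, and matching the limit with the combinatorial definition \eqref{eq:mpay}, requires a careful multi-scale analysis; this analysis, rather than the reduction above, is the real content of the paper.
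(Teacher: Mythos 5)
Your proposed reduction is circular. You want to derive Theorem~\ref{theo:optmean} from the main limit result of Section~\ref{sec:fromblackwell} (Theorem~\ref{th:main}(b)), but the paper's proof of Theorem~\ref{th:main}(b) itself invokes Theorem~\ref{theo:optmean} as its final step. The limit argument via Lemma~\ref{lem:almost} only yields the saddle-point inequalities
\[
\expec{s}{\sigma}{\strbopt}{\putlt{\rew,\poids,\prtmap}} \leq
\expec{s}{\straopt}{\strbopt}{\putlt{\rew,\poids,\prtmap}} \leq
\expec{s}{\straopt}{\tau}{\putlt{\rew,\poids,\prtmap}}
\]
for \emph{deterministic memoryless} $\sigma,\tau$, because Lemma~\ref{lem:almost} is proved only for such strategies (it rests on the induced finite Markov chain and rationality in $t$). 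To conclude that $\straopt,\strbopt$ are optimal against \emph{all} strategies one needs precisely the statement that, for priority mean-payoff games, optimality within the deterministic memoryless class implies optimality overall --- and that is exactly what Theorem~\ref{theo:optmean} supplies. In the discounted setting the analogous upgrade is provided by Shapley's theorem, but no such result is available for free for the payoff $\putlt{\rew,\poids,\prtmap}$. You also cannot sidestep this by taking the limit of the Shapley inequalities against an \emph{arbitrary} strategy $\sigma$, since the convergence $\expec{s}{\sigma}{\strbopt}{\putlt{\lambda_t}}\to\expec{s}{\sigma}{\strbopt}{\putlt{\rew,\poids,\prtmap}}$ is not established (and is delicate, as the $\limsup$ in \eqref{eq:mpay} need not be a limit under general strategies).

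The paper's actual proof goes a different route entirely: it applies the transfer theorem of~\cite{gimbert:zielonka:2010}, which reduces the two-player statement to the existence of optimal deterministic memoryless strategies in all \emph{one-player} games with payoff $\putlt{\rew,\poids,\prtmap}$ and with payoff $-\putlt{\rew,\poids,\prtmap}$. The first half is quoted from~\cite{gimbert:zielonka:2007b}; the second half requires an extra argument relating $-\putlt{\rew,\poids,\prtmap}$ (a $\liminf$-type payoff) to $\putlt{-\rew,\poids,\prtmap}$ (a $\limsup$-type payoff), using the fact that in a finite Markov chain the $\limsup$ is almost surely a limit. Some argument of this independent, non-limit-based kind is unavoidable; your proposal as written does not contain one.
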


\begin{proof}
The proof of Theorem~\ref{theo:optmean}
relies on the transfer theorem proved
in~\cite{gimbert:zielonka:2010}. 
This theorem states the following:
if a payoff function $\utlt$ admits
optimal deterministic memoryless strategies
in all one-player perfect information
stochastic games over finite arenas equipped with
payoff $\utlt$ or $-\utlt$, 
then all two-player perfect information stochastic games
over finite arenas with payoff $\utlt$
have also optimal deterministic memoryless strategies for both players.
 
In~\cite{gimbert:zielonka:2007b},
we proved that one-player games 
equipped with the payoff function $\putlt{\rew,\poids,\priority}$
have optimal deterministic memoryless strategy.
 It remains to prove the same for one-player games
equipped with the payoff function $-\putlt{\rew,\poids,\priority}$:
 \begin{equation}
-\putlt{\rew,\poids,\priority}(s_0s_1s_2\cdots)
= \liminf_k 
\frac{
- \sum_{i=0}^k \ew_{\priority(p)}(s) \cdot \poids(s_i) \cdot \rew(s_i)}%
{\sum_{i=0}^k  \ew_{\priority(p)}(s_i)\cdot \poids(s_i)} \enspace .
\label{eq:mpaybis}
\end{equation}
Let us denote $-\rew$ the reward mapping defined by $(-\rew)(s) = -\rew(s)$.
Then, 
 \begin{equation}
\putlt{-\rew,\poids,\priority}(s_0s_1s_2\cdots)
= \limsup_k 
\frac{
- \sum_{i=0}^k \ew_{\priority(p)}(s) \cdot \poids(s_i) \cdot \rew(s_i)}%
{\sum_{i=0}^k  \ew_{\priority(p)}(s_i)\cdot \poids(s_i)} \enspace .
\label{eq:mpayter}
\end{equation}
 The expected values of
 $-\putlt{\rew,\poids,\priority}$
 and $\putlt{-\rew,\poids,\priority}$
 coincide on Markov chains, because in a Markov chain, the limsup
 in~\eqref{eq:mpayter} 
 is almost-surely a limit,
 see the proof of Theorem~7, page~8 of~\cite{gimbert:zielonka:2007b}.
 Since for every play, $-\putlt{\rew,\poids,\priority}(p)\leq
 \putlt{-\rew,\poids,\priority}(p)$, 
 this implies that in a one-player arena, every deterministic memoryless
 strategy optimal for 
 the payoff function $\putlt{-\rew,\poids,\priority}$ is optimal for the
 payoff function $-\putlt{\rew,\poids,\priority}$ as well,
 and these two games have the same values and the same deterministic
 memoryless optimal 
 strategies. 
 This completes the proof.
\end{proof}

\section{From rationally parametrized discounted games to priority
  mean-payoff games}
\label{sec:fromblackwell}

\subsection{Priority mean-payoff derived from rational discount
  parametrization}
The aim of this short subsection is to show how a rational discount
parametrization induces in a canonical way a weighted priority
system. 

Let $\lambda_t$ be a rational discount parametrization.
The fact that $\lim_{t\uparrow 1}(1-\lambda_t(s))=0$  implies that
for each state $s$, the function $t\mapsto 1-\lambda_t(s)$
 factorizes as 
$g_s(t)(1-t)^{\prtmap(s)}$ where $\prtmap(s)\in\NN$ is a positive
integer constant and
$t \mapsto g_s(t)$ is a rational function such that
$g_s(1)\neq 0$.
Moreover since $1-\lambda_t(s)$ is positive for
$t\in(1-\varepsilon,1)$,
$g_s(t)$ is also positive in the same interval and by continuity 
of $g_s(t)$,  $g_s(1) > 0$.

Now, for each state $s$, take $\prtmap(s)$ defined above
as the priority of $s$ and 
$\poids(s):=g_s(1)$ as the weight of $s$.
We say that $(\poids,\priority)$ defined in this way is the
\emph{weighted priority system} derived from the rational
 discount parametrization 
$\lambda_t$.

\subsection{Limit of a discounted game}
The following theorem establishes a remarkable link between discounted
games and weighted priority mean-payoff games. Roughly speaking it
shows that the latter  are the limit of discounted games, the limit not
only in the sense of game values (part (a)) but also 
the optimality of strategies is preserved in the limit. 
\begin{theorem}
\label{th:main}
Let \arn\ be a fixed arena and let
$t\mapsto \lambda_t$ be a rational discount parametrization for \arn.
Let
$(\poids,\pi)$ be the  weighted priority system derived from $\lambda_t$.
Finally let \straopt\ and \strbopt\ be deterministic memoryless Blackwell optimal
strategies for the  discounted game $(\arn,\putlt{\lambda_t})$.

Then
\begin{enumerate}[(a)]
\item
for each state $s$, $\lim_{t\uparrow 1}\val_s(\putlt{\lambda_t}) =
\val_s(\putlt{\rew,\poids,\prtmap})$, where $\val_s(\putlt{\lambda_t})$ is the
value of the game $(\arn,\putlt{\lambda_t})$ and 
$\val_s(\putlt{\rew,\poids,\prtmap})$ is the value of the game
$(\arn,\putlt{\rew,\poids,\prtmap})$,
and
\item
if \straopt\ and \strbopt\ are Blackwell optimal memoryless
deterministic strategies for  the discounted game $(\arn,\putlt{\lambda_t})$ then 
\straopt\ and \strbopt\ are optimal for the priority mean-payoff 
game
$(\arn,\putlt{\rew,\poids,\prtmap})$
\end{enumerate}
\end{theorem}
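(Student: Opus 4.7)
The plan is to reduce Theorem~\ref{th:main} to a single convergence statement for fixed deterministic memoryless strategy pairs, and then combine it with Blackwell optimality (Theorem~\ref{th:blackwell}) and the existence of deterministic memoryless optimal strategies for priority mean-payoff games (Theorem~\ref{theo:optmean}) to deduce both (a) and (b) at once.

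The central ingredient I would establish first is the following key lemma: for any deterministic memoryless strategies $\sigma, \tau$ and any state $s$,
\[
\lim_{t \uparrow 1} \expec{s}{\sigma}{\tau}{\putlt{\lambda_t}} = \expec{s}{\sigma}{\tau}{\putlt{\rew,\poids,\prtmap}}.
\]
Once $\sigma$ and $\tau$ are fixed the arena becomes a finite Markov chain on $\states$, so this is really a statement about Markov chains. Viewing such a chain as a trivial one-player arena in which every state has a single available action, one should be able to deduce the lemma from the one-player convergence established in~\cite{gimbert:zielonka:2007b}.

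Granted the key lemma, parts (a) and (b) follow rapidly. Theorem~\ref{th:blackwell} furnishes $\varepsilon > 0$ such that for every $t \in (1-\varepsilon,1)$, every state $s$, and all deterministic memoryless $\sigma$ and $\tau$,
\[
\expec{s}{\sigma}{\strbopt}{\putlt{\lambda_t}} \leq
\expec{s}{\straopt}{\strbopt}{\putlt{\lambda_t}} \leq
\expec{s}{\straopt}{\tau}{\putlt{\lambda_t}}.
\]
Letting $t \uparrow 1$ and applying the key lemma to each of the three expectations yields the same inequalities with $\putlt{\rew,\poids,\prtmap}$ in place of $\putlt{\lambda_t}$. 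Fixing $\strbopt$ turns the priority mean-payoff game into an MDP for \M\ in which deterministic memoryless strategies already achieve the supremum (by the one-player result underlying Theorem~\ref{theo:optmean}), and symmetrically for $\straopt$ and \m. Hence the limiting inequalities in fact hold against arbitrary strategies, proving (b). The common value $\expec{s}{\straopt}{\strbopt}{\putlt{\rew,\poids,\prtmap}}$ then equals $\val_s(\putlt{\rew,\poids,\prtmap})$; since Blackwell optimality also gives $\val_s(\putlt{\lambda_t}) = \expec{s}{\straopt}{\strbopt}{\putlt{\lambda_t}}$ for $t$ near $1$, one more application of the key lemma yields (a).

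The main obstacle is the key lemma. Decomposing the induced Markov chain into its bottom strongly connected components (BSCCs), one expects that within a BSCC $B$ with minimal priority $\alpha_B = \min_{s \in B}\prtmap(s)$ the dominant contribution to the discounted payoff as $t \uparrow 1$ comes from visits to states of priority $\alpha_B$: the factorization $1-\lambda_t(s) = g_s(t)(1-t)^{\prtmap(s)}$ with $g_s(1) = \poids(s)$ endows such states with instantaneous weight of order $(1-t)^{\alpha_B}$, whereas states in $B$ of strictly higher priority contribute weight of strictly smaller order and hence vanish in the limit. A careful asymptotic accounting, combined with the ergodic theorem for finite Markov chains to replace time averages by stationary averages, and with Lemma~\ref{lem:rational} to control the transient phase before absorption into a BSCC, should produce in the limit exactly the weighted stationary mean
\[
\frac{\sum_{s \in B}\ew_{\alpha_B}(s)\poids(s)\mu_B(s)\rew(s)}{\sum_{s \in B}\ew_{\alpha_B}(s)\poids(s)\mu_B(s)},
\]
where $\mu_B$ is the stationary distribution on $B$, which coincides with the almost-sure value of $\putlt{\rew,\poids,\prtmap}$ on plays absorbed in $B$.
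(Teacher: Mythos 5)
Your proposal follows essentially the same route as the paper: your key lemma is exactly the paper's Lemma~\ref{lem:almost}, which the paper simply cites as Theorem~7 of~\cite{gimbert:zielonka:2007b} rather than reproving it via the BSCC/asymptotic analysis you sketch, and your passage from Blackwell optimality through the limit to optimality among deterministic memoryless strategies, and then to full optimality via Theorem~\ref{theo:optmean}, matches the paper's argument step for step. Part (a) is likewise deduced from (b) together with the lemma in both treatments.
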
 
Let us note that part (a) of Theorem~\ref{th:main} 
was proved in ~\cite{gimbert:zielonka:2007b} but only for one-player 
games\footnote{In fact, \cite{gimbert:zielonka:2007b}
shows that
the convergence of game values holds not only for rational
parametrizations but for any ``reasonable'' parametrization of
discount factors.}
(Markov decision processes).

However, in~\cite{gimbert:zielonka:2007b}  we were unable to establish any result
linking optimal strategies for discounted games with optimal
strategies of weighted priority games.
Thus the main achievement of the present paper is part (b) of
Theorem~\ref{th:main}.

The following result was proved in~\cite{gimbert:zielonka:2007b}
(Theorem~7 in \cite{gimbert:zielonka:2007b}):
\begin{lemma}
\label{lem:almost}
Let $\lambda_t$ be a rational discount parametrization and let
$(\poids,\prtmap)$ be the derived weighted priority system.
Then for each state $s$ and for all deterministic memoryless
strategies $\sigma,\tau$:
\[ \lim_{t\uparrow 1} \expec{s}{\sigma}{\tau}{\utlt_{\lambda(t)}}
= \expec{s}{\sigma}{\tau}{\putlt{\rew,\poids,\prtmap}}. \]
\end{lemma}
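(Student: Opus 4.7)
Fix deterministic memoryless strategies $\sigma,\tau$. They induce a finite Markov chain on $\states$ with transition matrix $\tpseul$, as in the proof of Lemma~\ref{lem:rational}. Let $C_1,\ldots,C_m$ be its bottom strongly connected components and $T=\states\setminus\bigcup_j C_j$ the set of transient states. For each $j$ let $\mu_j$ be the unique stationary distribution supported on $C_j$, set $\alpha_j=\min_{s'\in C_j}\prtmap(s')$, and define
\[
V_j\;=\;\frac{\sum_{s'\in C_j,\,\prtmap(s')=\alpha_j}\mu_j(s')\poids(s')\rew(s')}{\sum_{s'\in C_j,\,\prtmap(s')=\alpha_j}\mu_j(s')\poids(s')},\qquad W_j\;=\;\sum_{s'\in C_j,\,\prtmap(s')=\alpha_j}\mu_j(s')\poids(s').
\]
Let $h_j(s)$ be the probability that the chain started at $s$ is eventually absorbed in $C_j$. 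By the ergodic theorem for finite Markov chains, conditionally on absorption in $C_j$ almost every trajectory has Ces\`aro frequencies given by $\mu_j$ and satisfies $\liminf_i\prtmap(s_i)=\alpha_j$; substituting into \eqref{eq:mpay} yields
\[
\expec{s}{\sigma}{\tau}{\putlt{\rew,\poids,\prtmap}}\;=\;\sum_j h_j(s)\,V_j,
\]
which is the target value for the limit of the discounted expectations.

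\textbf{Reduction to an irreducible recurrent class.} Using the stopping-game interpretation of Section~\ref{sec:interpretations}, let $T_R$ denote the first time the chain enters $\bigcup_j C_j$. Conditioning on $T_R$ and on the entry state, the strong Markov property gives
\[
\expec{s}{\sigma}{\tau}{\utlt_{\lambda_t}}\;=\;A_t(s)\;+\;\sum_j\sum_{s'\in C_j}p_{j,t}(s,s')\,\expec{s'}{\sigma}{\tau}{\utlt_{\lambda_t}},
\]
where $A_t(s)$ collects trajectories that stop before time $T_R$ and $p_{j,t}(s,s')$ is the probability of entering $C_j$ at $s'$ without having stopped earlier. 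Because $T_R$ is almost surely finite with uniformly bounded expectation and the per-step stopping probability is at most $\max_{s'}(1-\lambda_t(s'))=O(1-t)$, we have $A_t(s)=O(1-t)$ and $\sum_{s'\in C_j}p_{j,t}(s,s')\to h_j(s)$ as $t\uparrow 1$. Hence it is enough to prove, for every irreducible recurrent class $C=C_j$ and every $s\in C$,
\[
\lim_{t\uparrow 1}\expec{s}{\sigma}{\tau}{\utlt_{\lambda_t}}\;=\;V_j,
\]
independently of the entry point.

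\textbf{Core step.} In an irreducible class $C$, rewrite
\[
\expec{s}{\sigma}{\tau}{\utlt_{\lambda_t}}\;=\;\sum_{s'\in C}(1-\lambda_t(s'))\,G_t(s,s')\,\rew(s'),
\]
where $G_t(s,s')$ is the $(s,s')$-entry of $(I-M_C)^{-1}$, i.e.\ the Green's function of the $C$-chain killed with state-dependent rate $1-\lambda_t(\cdot)$, with $M_C[s',s'']=\lambda_t(s')\tp{s'}{s''}$ restricted to $s',s''\in C$ (Lemma~\ref{lem:rational} guarantees the inverse exists for $t$ close to $1$). The matrix $I-M_C$ degenerates at $t=1$ to $I-P_C$, where $P_C$ is the stochastic transition matrix on $C$; its kernel is one-dimensional and spanned by the constant vector $\mathbf{1}_C$. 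A singular-perturbation expansion — or, equivalently, a renewal identity for positive-recurrent chains with small killing — yields, uniformly in $s,s'\in C$,
\[
(1-\lambda_t(s'))\,G_t(s,s')\;=\;\frac{\mu_j(s')(1-\lambda_t(s'))}{\sum_{s''\in C}\mu_j(s'')(1-\lambda_t(s''))}\;+\;o(1).
\]
Using the factorization $1-\lambda_t(s')=\poids(s')(1-t)^{\prtmap(s')}(1+o(1))$ derived from the definition of $(\poids,\prtmap)$, the denominator is asymptotic to $W_j(1-t)^{\alpha_j}$; the right-hand side therefore tends to $\mu_j(s')\poids(s')/W_j$ when $\prtmap(s')=\alpha_j$ and to $0$ when $\prtmap(s')>\alpha_j$. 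Multiplying by $\rew(s')$ and summing over $s'\in C$ gives exactly $V_j$.

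\textbf{Main obstacle.} Everything outside the core step is routine Markov-chain bookkeeping; the delicate point is the singular-perturbation asymptotic for $(I-M_C)^{-1}$, since the limit $I-P_C$ is not invertible. One can handle this either (i) analytically, by inverting $I-M_C$ via Cramer's rule and exploiting the rationality from Lemma~\ref{lem:rational} to extract the leading coefficients of the Laurent expansion around $t=1$, or (ii) probabilistically, by identifying $G_t(s,s')=\expec{s}{\sigma}{\tau}{N_t(s')}$ with $N_t(s')$ the number of visits to $s'$ before stopping, and comparing it to $\mu_j(s')\cdot\expec{s}{\sigma}{\tau}{N_t}$ (with $N_t$ the total number of steps before stopping) through a coupling or classical renewal argument for Markov chains with vanishing killing rate. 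Once the uniform asymptotic is established, combining it with the transient-phase estimate and the ergodic formula for the priority mean-payoff completes the proof.
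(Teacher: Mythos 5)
The paper does not actually prove this lemma: it is imported verbatim as Theorem~7 of \cite{gimbert:zielonka:2007b}, so there is no in-paper argument to compare yours against line by line. On its own merits your plan is sound and reaches the right answer: the decomposition into transient states and bottom strongly connected classes, the identification $\expec{s}{\sigma}{\tau}{\putlt{\rew,\poids,\prtmap}}=\sum_j h_j(s)V_j$ via the ergodic theorem, the $O(1-t)$ bound on the pre-absorption contribution (valid because every priority is at least $1$, so $\max_{s'}(1-\lambda_t(s'))=O(1-t)$), and the observation that $(1-\lambda_t(s'))G_t(s,s')$ is exactly the stopping-place distribution of Shapley's stopping game, whose mass concentrates on the minimal-priority states of the entered class in proportion to $\mu_j(s')\poids(s')$ — this is precisely how $V_j$ arises. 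The one place where you assert rather than prove is the uniform asymptotic $(1-\lambda_t(s'))G_t(s,s')=\mu_j(s')(1-\lambda_t(s'))/\sum_{s''}\mu_j(s'')(1-\lambda_t(s''))+o(1)$; this is true, but as written it is the entire content of the lemma for an irreducible class, and ``a singular-perturbation expansion yields'' is not yet an argument. You do name two workable completions, and I would point out that route (i) is lighter than you suggest: Lemma~\ref{lem:rational} already shows $t\mapsto\expec{s}{\sigma}{\tau}{\utlt_{\lambda_t}}$ is rational near $t=1$, and since it is a convex combination of the rewards it is bounded, so the limit \emph{exists} for free; all that remains is to identify it, e.g.\ by extracting the leading term of the Laurent expansion of $(I-M_C)^{-1}$ at $t=1$ (the kernel of $I-P_C$ being spanned by $\mathbf{1}_C$ with left eigenvector $\mu_j$ gives exactly your formula), or by the renewal identity $G_t(s,s')=\expec{s}{\sigma}{\tau}{N_t(s')}$ together with $\sum_{s'}(1-\lambda_t(s'))G_t(s,s')=1$. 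With that step written out the proof is complete.
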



\begin{proof}[Proof of Theorem~\ref{th:main}]

We begin with part (b).
Let \straopt, \strbopt\ be Blackwell optimal deterministic memoryless
strategies
for $\lambda_t$. 
Let $\sigma$ and $\tau$ be any deterministic memoryless
 strategies of players \M\ and
\m.
Then 
\begin{equation*}
\expec{s}{\sigma}{\strbopt}{\putlt{\lambda_t}} \leq
\expec{s}{\straopt}{\strbopt}{\putlt{\lambda_t}} \leq
\expec{s}{\straopt}{\tau}{\putlt{\lambda_t}} .
\end{equation*}
Taking the limit with $t\uparrow 1$ we get by Lemma~\ref{lem:almost}
\[
\expec{s}{\sigma}{\strbopt}{\putlt{\rew,\poids,\prtmap}} \leq
\expec{s}{\straopt}{\strbopt}{\putlt{\rew,\poids,\prtmap}} \leq
\expec{s}{\straopt}{\tau}{\putlt{\rew,\poids,\prtmap}} ,
\]
which shows that $\straopt$ and $\strbopt$ are optimal in the class of
deterministic memoryless strategies.
But Theorem~\ref{theo:optmean} implies that 
for priority mean-payoff games strategies optimal in the
class of deterministic memoryless strategies are optimal also when all
strategies are allowed.
This terminates the proof of (b).

Obviously (a) follows from (b) and from Lemma~\ref{lem:almost}.

\end{proof}

\section{Optimal but not Blackwell optimal strategies}
\begin{figure}[th]
\begin{center}
\includegraphics[scale=0.7]{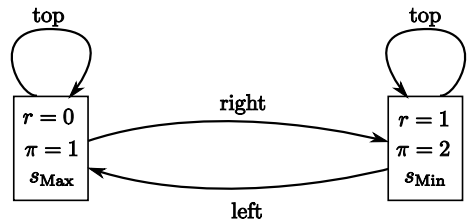}
\end{center}
\caption{A parity game. Player \M\ has two deterministic memoryless
  optimal strategies but only one of them is Blackwell optimal.}
\label{fig:exo}
\end{figure}

Theorem~\ref{th:main} stated that Blackwell optimal strategies are
also optimal for priority mean-payoff games. The converse is not true,
the notion of
Blackwell optimal strategies is strictly more restrictive.

We illustrate this with the game presented in Figure~\ref{fig:exo}.
Here we have two states $s_\M, s_\m$ controlled respectively by
players \M\ and \m. Both states have the same weight $1$ which is
omitted. The left state has priority $\pi=2$ and reward $r=0$, the
right state has priority $\pi=1$ and reward $r=1$, thus essentially
this is the usual parity game with two priorities.
Both players have two deterministic memoryless strategies.
The optimal strategy for player \m\ is to take  action ``left''. 
With this strategy state $s_\M$ with priority $1$ is visited
infinitely often and since this is the minimal priority in this games
the resulting payoff will $0$ whatever the strategy of
player \M.
Player \M\ can play ``top'' or ``right'', in both cases if player
\m\  uses the strategy described above the payoff is $0$ thus both
strategies are optimal for \M.

Now let us consider the associated discounted game with the canonical
parametrization.
Thus the discount factor of $s_\M$ is
$\lambda_t(s_\M)=1-(1-t)^{\pi(s_\M)}=t$
while the discount factor for $s_\m$ is
 $\lambda_t(s_\m)=1-(1-t)^{\pi(s_\m)}=1-(1-t)^2$.
For player \m\ the optimal strategy is still to always play  ``left''.
For player \M\ the strategies ``right'' and ``top'' are now different.
For example if we start from $s_\M$ then playing ``top'' will result
in payoff $0$ since we will visit only the state $s_\M$ with reward
$0$. On the other hand playing ``right'' we will visit infinitely
often the state $s_\m$ with a positive reward, thus for discounted
games playing ``right'' is strictly better for \M\ than playing ``top'' and the
strategy where \M\ plays ``right'' is the only Blackwell optimal
strategy.

The main motivation behind Blackwell optimal strategies comes from the
following observation (due to Blackwell).  Consider a mean-payoff game
controlled completely by player \M\ and suppose that there are only
two possible infinite plays. 
The first play begins with a long but finite
sequence of rewards $0$ followed by an infinite sequence of rewards $1$. The
mean payoff for such history is $1$, the initial sequence of $0$ does
not count on the limit.
Consider now the second play which is an infinite sequence of rewards
$1$, without any $0$. Here also the mean payoff is also $1$. 
Thus player \M\ is indifferent between two histories.
But from the point of view of  Maximizer clearly the second history
is better than the first one, one prefers to have the reward $1$ each day
rather than to begin with the reward $0$.
This difference is captured by
Blackwell optimality.


\bibliographystyle{eptcs}
\bibliography{games}

\end{document}